\newtheorem{theorem}{Theorem}	
\newtheorem{lemma}{Lemma}
\newtheorem{proposition}{Proposition}
\newtheorem{definition}{Definition}
\newtheorem{remark}{Remark}
\newcommand{\onetom}{1,\ldots,m}
\newcommand{\ovlam}{\overline{\lambda}}
\newcommand{\unlam}{\underline{\lambda}}
\newenvironment{sequation}{\small\begin{equation}}{\end{equation}}
\newenvironment{seqnarray}{\small\begin{eqnarray}}{\end{eqnarray}}
\newenvironment{seqnarray*}{\small\begin{eqnarray*}}{\end{eqnarray*}}
\begin{document}

\title{Event-triggered Stabilization of Coupled Dynamical Systems with Fast Markovian Switching}
\author{\IEEEauthorblockN{Yujuan Han}
\IEEEauthorblockA{College of Information Engineering\\
Shanghai Maritime University\\
Shanghai, P.~R.~China \\
Email: yjhan@shmtu.edu.cn}
\and
\IEEEauthorblockN{Wenlian Lu}
\IEEEauthorblockA{School of Mathematical Sciences\\
Fudan University\\
Shanghai, P.~R.~China \\
Email: wenlian@fudan.edu.cn}
\and
\IEEEauthorblockN{Tianping Chen\\ }
\IEEEauthorblockA{School of Computer Sciences/Mathematics\\
Fudan University\\
Shanghai, P.~R.~China \\
Email: tchen@fudan.edu.cn}}


\maketitle
\thispagestyle{fancy}
\fancyhead{}
\lhead{}
\lfoot{}
\cfoot{}
\rfoot{}
\renewcommand{\headrulewidth}{0pt}
\renewcommand{\footrulewidth}{0pt}

\begin{abstract}
In this paper, stability of linearly coupled dynamical systems with feedback pinning is studied. Event-triggered
rules are employed on both diffusion coupling and
feedback pinning to reduce the updating load of the coupled system. Here, both the coupling matrix and the set of pinned-nodes vary with time are induced by a homogeneous Markov chain. 
For each node, the diffusion coupling is set up from the state information of its neighbors' at their latest triggered time and the feedback pinning uses the target's (if pinned) information at the node's latest event time. The next event time is triggered by some specified criteria. Two event-triggering rules are proposed and it is proved that if the system with time-average coupling and pinning gains are stable, the event-triggered strategies can stabilize the system if the switching is sufficiently fast. Moreover, Zeno behaviors are excluded in some cases. Finally, numerical examples are presented to illustrate the theoretical results.
\end{abstract}



\section{Introduction}
Control and synchronization of large-scale dynamical systems have received much attention in recent years \cite{Wu}-\cite{Belykh}. In some cases, it is desired to control a complex network  to a homogeneous trajectory of the uncoupled system, and many control strategies are taken into account to stabilize the system. Among them, pinning control is an effective scheme. Due to the interaction of the network, it is not necessary to impose controllers on all nodes. The general idea behind pinning control is to apply some local feedback controllers only to a fraction of nodes while the rest of nodes can be affected through the interactions among nodes \cite{Chen07}-\cite{Han14}.

In most existing works on linearly coupled dynamical systems, each node needs to gather information of its own state and neighborhood's states and update them continuously or in a fixed sampling rate \cite{Wu}-\cite{Han14}. However, as pointed out in \cite{Astrom},
the event-based sampling technique showed better performance than sampling periodically in time for some simple systems. Hereafter, a number of researchers suggested that the event-based control algorithms can be utilized to reduce communication and computation load in networked systems but still maintain control performance \cite{Dimarogonas}-\cite{Lu15pin}. Therefore, the event-based control is particularly suitable for networked systems with limited resources and has attracted wide interests in the scope of distributed control of networked systems.

In some recent papers \cite{Alderisio}-\cite{Lu15pin}, the authors addressed event-triggered algorithms for pinning control of networks. \cite{Alderisio} gave an exponentially decreasing threshold function, hence, the convergence rates of algorithms are predesigned. The event-triggering threshold in \cite{Gao} was prescribed by the distance among states of nodes and target. However, the coupling topology of the network was static. \cite{Lu15pin} employed event-triggered configurations and pinning control terms to realize stability in linearly coupled dynamical systems with Markovian switching in both coupling matrix and pinned node set. However, to realize stability of the switching system, there must exist at least one stable subsystem. If stability cannot be achieved under any subsystem, the switching sequence need to be designed to assure the stability. Especially, in the fast switching theory, by constructing a stable time-average system, the dynamics of switched system can be stable when the switching is fast enough \cite{Frasca}-\cite{Porfiria}.

In the real world, the graph topology of a network may change very quickly by jumps or switches, due to link failures or new creation in a network. So, it is inevitable to study the stability of fast switching systems. Motivated by these works as well as our previous work \cite{Han14}, in this paper, we employ the event-triggered strategy in both coupling configuration and pinning control terms to realize stability in dynamical systems with fast Markovian switching couplings and pinned node set. Hence, all the subsystems among switching can be unstable in this paper. Noticing the significance of the average system in the analysis of stability of fast switching system \cite{Frasca}-\cite{Porfiria}. In this paper, one triggered strategy is given on the average coupling matrix and average pinned node set, the other is given on the time-varying coupling matrix and pinned node set. For each strategy, it is proved that the proposed event-triggered strategy guarantees the stability of the coupled dynamical systems.

This paper is organized as follows. In Sec. \ref{sec-pre}, the underlying problem is formulated. In Sec. \ref{sec-con}, we propose the event-triggering schemes of diffusion configuration and pinning terms to pin the coupled systems to a homogenous pre-assigned trajectory of the uncoupled node system. Numerical simulations are given in Sec. \ref{sec-sim} to verify the theoretical results. Finally, this paper is concluded in Sec. \ref{sec-conclu}.

{\em Notations}: For a matrix $A$, denote $A_{ij}$ the elements of $A$ on the $i$-th row and $j$-th column. $A^{s} = (A+A^{\top})/2$ denotes the symmetry part of a square matrix $A$. For a vector $x$, denote by $x>0$ that every element of $x$ is positive.
$I_m$ denotes the identity matrix with dimension $m$. For a matrix $A$, denote by $\bar{\lambda}(A)$ and $\underline{\lambda}(A)$ the largest and smallest eigenvalues in module. For a symmetric matrix $B$, denote its $i$-th largest eigenvalue by $\lambda_i(B)$. The symbol $\otimes$ represents the Kronecker product. $\| A\|$ denotes the matrix norm of $A$ induced by the vector norm $\|\cdot\|$. For a matrix $A$, $\|A\|_{\infty} = \max_i\sum_{j}|A_{ij}|$. In particular, without special notes, $L_{2}$-vector norm is used in this paper and denote it by $\|\cdot\|$, i.e. $\|x\|= \|x\|_2 =��\sqrt{\sum_{i} |x_i|^2}$.

\section{Preliminaries}\label{sec-pre}

In this paper, we consider a network of linearly coupled dynamical systems with discontinuous diffusions and feedback pinning terms as follows:
\begin{eqnarray}\nonumber
\dot{x}_{i}(t)=f(x_{i}(t),t)-c\sum_{j=1}^{m}L_{ij}(\sigma_t)\Gamma[x_{j}(t_{k_j(t)}^{j})-x_{i}(t_{k}^{i})]\\
-c\epsilon D_i(\sigma_t)\Gamma[x_{i}(t_{k}^{i})-s(t_k^i)]
,~ t_{k}^{i}\le t<t_{k+1}^{i} \label{cds}
\end{eqnarray}
where $x_{i}(t)\in\mathbb R^{n},i=\onetom$ denotes the state vector of node $i$, the continuous map $f(\cdot,\cdot):\mathbb R^{n}\times \mathbb R^+\to\mathbb R^{n}$ denotes the identical node dynamics. $c$ is the uniform coupling strength at each node. $\sigma_t$ denotes the switching rule. $L_{ij}(\sigma_t)=-1$ if $i$ is linked to $j$ otherwise $L_{ij}(\sigma_t)=0$, and $L_{ii}(\sigma_t)=-\sum_{j=1}^{m}L_{ij}(\sigma_t)$. $\Gamma = [\Gamma_{kl}]_{k,l=1}^n\in\mathbb R^{n\times n}$ is the inner configuration matrix with $\Gamma_{kl}\neq 0$ if two nodes are connected by the $k$-th and $l$-th state component respectively.  $D_i(\sigma_t)= 1$ if node $i$ is pinned at time $t$ by a specific node dynamic trajectory $s(t)$ with $\dot{s} = f(s(t),t)$, $s(0) =s_0$, otherwise $D_i(\sigma_t) = 0$. $\epsilon$ is the pinning strength gain over the coupling strength.

$t_{k_j(t)}^j$ with $k_{j}(t)=\max\{k': t^{j}_{k'}\le t\}$ is the latest event time of node $j$ at time $t$. Each node takes the latest information of all its neighbors into account in its diffusion coupling term. Hence, for agent $i$ and time $t_k^i<t\le t_{k+1}^i$, if one of its neighbors, for example, denoted by $j$, is triggered at $t=t_{k'+1}^j$ (let $k��$ be the latest event at node $j$ before $t$), then $j$ transfers its current information to $i$ and $x_j(t_{k'}^j)$ in the diffusion coupling term of node $i$ is replaced by $x_j(t_{k'+1}^j)$. This process goes on through all nodes in a parallel fashion.

In this paper, we suppose the switching rule of the coupling topologies and pinned node sets follows a homogeneous continuous Markov chain \cite{Chilina}, denoted by $\sigma_t$. Suppose the state space of $\sigma_t$ is $\mathbb S =\{1,\cdots,N\}$ and its generator $Q = [q_{ij}]_{N\times N}$ is given by:
\begin{eqnarray*}\label{Q}
  \mathbb{P}\{\sigma_{t+\Delta}=j|\sigma_{t}=i\}
  =\left\{\begin{array}{lr}q_{ij}\Delta +o(\Delta), & i\ne j,\\
  1+q_{ii}\Delta+o(\Delta),& i=j,
  \end{array}
  \right.
\end{eqnarray*}
where $\Delta>0$, $\lim_{\Delta \to 0}(o(\Delta)/\Delta)=0$,
$p_{ij}=-\frac{q_{ij}}{q_{ii}}>0$ is the transition probability from state $i$ to
$j$ if $i\ne j$, while $q_{ii}=-\sum_{j=1,j\ne i}^{N}q_{ij}$.

Let $\Delta_{r}$ for $r=0,1,\cdots$ be the successive sojourn time between jumps of $\sigma_t$. Therefore, the sojourn time in state $j$ is exponentially distributed with parameter
$-q_{jj}$. Clearly, $\mathbb E\left[\Delta_n\right]\le \max_j\frac{1}{-q_{jj}}$. Denote $P = [p_{ij}]_{N\times N}$ the probability transition matrix of the embedded discrete-time Markov process of $\sigma_t$.

Denote $\pi(r) = [\pi_1(r),\cdots, \pi_N(r)]$ the state distribution of the process at the $r$-th switching. Then from the Chapman-Kolmogorov equation \cite{Bremaud}, $\pi(r) =\pi(0) P^r$. If the embedded discrete-time Markov process is ergodic, then from \cite{Billingsley}, $P$ is a primitive  and therematrix exists a state distribution $\bar{\pi} = [\bar{\pi}_1,\cdots,\bar{\pi}_N]$ satisfying $\bar{\pi} = \bar{\pi} P$, where $\bar{\pi}_i>0$ for all $i$. From \cite{Horn},  there exist positive numbers $M_0$ and $\kappa<1$ such that
\begin{equation}\label{Markov_pi}
|\pi_j(r) - \bar{\pi}_j| \le M_0 \kappa^r.
\end{equation}
Pick $\pi_j = \frac{\bar{\pi}_j/q_{jj}}{\sum_{l=1}^N \bar{\pi}_l /q_{ll}}, j=1,\cdots,N$. Then, $\pi Q = \mathbf{0}$ and $\pi = [\pi_1,\cdots,\pi_N]$ is the invariant distribution of Markov process $\sigma_t$. Denote $\bar{L} = \sum_{i=1}^m\pi_i L(i)$ and $\bar{D} = \sum_{i=1}^m\pi_i D(i)$ the average matrices.

For the node dynamics map $f$, it is said to belong to some class ${\rm QUAD}(G,\alpha\Gamma,\beta)$ for some positive definite matrix $G\in\mathbb R^{n\times n}$, constants $\alpha\in\mathbb R$, $\beta>0$ and $\Gamma\in\mathbb R^{n\times n}$, if
\begin{eqnarray*}
(u-v)^{\top}G\bigg[f(u,t)-f(v,t)-\alpha\Gamma(u-v)\bigg]
\\
\le-\beta(u-v)^{\top}G(u-v)\label{quad}
\end{eqnarray*}
hold for all $u,v\in\mathbb R^{n}$.
Throughout this paper, we make\\
{\bf Assumption 1:}~
$f(\cdot,t)$ satisfies globally Lipschitz condition with coefficient $L_f$, i.e. $\|f(u,t)-f(v,t)\|\le L_f \|u-v\|$ hold for all $u,v\in\mathbb R^{n}$.

Then, by
\begin{small}
\begin{eqnarray*}
&&(u-v)^{\top} G \left[f(u,t)-f(v,t)-\alpha \Gamma(u-v)\right]\\
&\le & \frac{1}{2}(u-v)^{\top} G^2 (u-v) +\frac{1}{2}\left\|f(u,t)-f(v,t)\right\|^2\\
&&- \alpha (u-v)^{\top}G\Gamma (u-v)\\
&\le& \frac{1}{2}\left(\overline{\lambda}(G)+\frac{L_f^2}{\underline{\lambda}(G)}\right)(u-v)^{\top} G (u-v)\\
&& - \frac{\alpha\underline{\lambda}\left(G\Gamma+\Gamma^{\top}G\right)}
{\overline{\lambda}(G)}(u-v)^{\top} G (u-v)
\end{eqnarray*}
\end{small}	
we have $f\in {\rm QUAD}(G,\alpha\Gamma,\beta)$ with $\alpha \in\mathbb R, \beta=\frac{\alpha\underline{\lambda}\left(G\Gamma+\Gamma^{\top}G\right)}
{\overline{\lambda}(G)}-\left(\frac{\overline{\lambda}(G)}{2}+
\frac{L_f^2}{2\underline{\lambda}(G)}\right)$.

In fact, we do not need the Lipschitz condition hold for all $u,v\in\mathbb R^{n}$ but for a region $\Lambda\subset\mathbb R^{n}$ which contains the global attractors of the coupling system.

For any vector $\xi\in\mathbb R^n$ and matrix $B\in \mathbb R^{n\times n}$, we have $\xi^{\top}B \xi = \sum_{i,j}B_{ij}\xi_i\xi_j\le \sum_{i,j} \frac{|B_{ij}|}{2}\left(\xi_i^2+\xi_j^2\right)$. Hence,
\begin{lemma}\label{lemma-1}
For any vector $\xi\in\mathbb R^n$, any matrix $B\in \mathbb R^{n\times n}$, $\xi^{\top}B \xi\le n\|B\|_{\infty}\xi^{\top}\xi$ holds.
\end{lemma}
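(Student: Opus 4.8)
The plan is to build directly on the elementary inequality displayed immediately before the statement. Starting from $\xi^{\top}B\xi = \sum_{i,j}B_{ij}\xi_i\xi_j$, I would first apply Young's inequality entrywise, $B_{ij}\xi_i\xi_j \le |B_{ij}|\,|\xi_i|\,|\xi_j| \le \frac{|B_{ij}|}{2}(\xi_i^2+\xi_j^2)$, to recover $\xi^{\top}B\xi \le \sum_{i,j}\frac{|B_{ij}|}{2}(\xi_i^2+\xi_j^2)$, exactly the bound already recorded in the text.

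The crux is then a uniform entrywise estimate together with a counting argument. Since $\|B\|_{\infty} = \max_i\sum_j|B_{ij}|$ is the maximal absolute row sum, every single entry obeys $|B_{ij}| \le \sum_{j'}|B_{ij'}| \le \|B\|_{\infty}$. Substituting this uniform bound gives $\xi^{\top}B\xi \le \frac{\|B\|_{\infty}}{2}\sum_{i,j}(\xi_i^2+\xi_j^2)$. The remaining double sum decouples because the free index runs over exactly $n$ values:
$$\sum_{i,j}(\xi_i^2+\xi_j^2) = \sum_{i,j}\xi_i^2 + \sum_{i,j}\xi_j^2 = n\sum_i\xi_i^2 + n\sum_j\xi_j^2 = 2n\,\xi^{\top}\xi.$$
Combining the two bounds yields $\xi^{\top}B\xi \le n\|B\|_{\infty}\,\xi^{\top}\xi$, which is the claim.

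There is essentially no analytic obstacle here; the only point requiring care is to count multiplicities correctly when passing from the per-entry bound to the aggregate factor $n$. I would note in passing that a slightly sharper route --- keeping the two sums separate, bounding $\sum_{i,j}|B_{ij}|\xi_i^2$ by the true row sums and $\sum_{i,j}|B_{ij}|\xi_j^2$ by $n\|B\|_{\infty}\xi^{\top}\xi$ --- improves the constant to $\tfrac{n+1}{2}\|B\|_{\infty}$, but the looser constant $n\|B\|_{\infty}$ stated in the lemma is all that the subsequent stability estimates require.
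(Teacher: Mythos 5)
Your proof is correct and takes essentially the same approach as the paper: the paper's entire argument is the Young-inequality bound $\xi^{\top}B\xi=\sum_{i,j}B_{ij}\xi_i\xi_j\le\sum_{i,j}\frac{|B_{ij}|}{2}(\xi_i^2+\xi_j^2)$ displayed just before the lemma, followed by ``Hence,'' and your completion (bounding each $|B_{ij}|$ by $\|B\|_{\infty}$ and counting $\sum_{i,j}(\xi_i^2+\xi_j^2)=2n\,\xi^{\top}\xi$) is exactly the finishing step the paper leaves implicit. Your side remark on the sharper constant $\frac{n+1}{2}\|B\|_{\infty}$ is also correct, though not needed.
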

\begin{lemma}
Denote by $\{\tau_r\}_{r\in\mathbb N}$ the time sequence that the topology of the network jumps and $\Delta_r = \tau_{r+1}-\tau_r$. Then
\begin{small}
\begin{align}\nonumber
&\left\|\mathbb E\left[\int_{\tau_r}^{\tau_{r+1}}(L(\sigma_t)-\bar{L})dt\Big|x(\tau_r)\right]\right\|_{\infty} \\\label{est-ave-L}
\le &2N\cdot M_0 \max_{j}\frac{1}{-q_{jj}}\max_{i}\|L(i)\|_{\infty} \kappa^r
\end{align}
\end{small}
and
\begin{small}
\begin{align}\nonumber
 &\left\|\mathbb E \left[ \int_{\tau_r}^{\tau_{r+1}} \left(D(\sigma_{t})-\bar{D}\right)dt \Big| x(\tau_r)\right] \right\|_{\infty}\\ \label{est-ave-D}
 \le & 2N\cdot M_0 \max_{j}\frac{1}{-q_{jj}}\max_{j}\|D(j)\|_{\infty} \kappa^r.
\end{align}
\end{small}
\end{lemma}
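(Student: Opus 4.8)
The plan is to exploit the fact that $\sigma_t$ is piecewise constant: on each sojourn interval $[\tau_r,\tau_{r+1})$ the mode does not change, so the integral collapses to a single term, after which the defining property of the averaged matrix $\bar{L}$ kills the leading contribution and only the ergodicity defect $\eqref{Markov_pi}$ survives. First I would write $L(\sigma_t)\equiv L(\sigma_{\tau_r})$ on $[\tau_r,\tau_{r+1})$, so that $\int_{\tau_r}^{\tau_{r+1}}(L(\sigma_t)-\bar L)\,dt=\Delta_r\,(L(\sigma_{\tau_r})-\bar L)$. Conditioning additionally on the mode $\sigma_{\tau_r}$ and invoking the strong Markov property (given $\sigma_{\tau_r}=j$, the sojourn $\Delta_r$ is exponential with parameter $-q_{jj}$, hence of mean $\tfrac{1}{-q_{jj}}$, and is independent of the past, in particular of $x(\tau_r)$), the tower rule yields
\[
\mathbb E\Big[\Delta_r\big(L(\sigma_{\tau_r})-\bar L\big)\,\big|\,x(\tau_r)\Big]
=\sum_{j=1}^N \mathbb P\big(\sigma_{\tau_r}=j\,\big|\,x(\tau_r)\big)\,\frac{1}{-q_{jj}}\,\big(L(j)-\bar L\big).
\]

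The key algebraic step is that $\bar L$ is chosen precisely so that the stationary-weighted version of this sum vanishes. Indeed, with $\pi_j=\frac{\bar\pi_j/(-q_{jj})}{\sum_l \bar\pi_l/(-q_{ll})}$ and $\bar L=\sum_j \pi_j L(j)$, factoring out the normalizer $S=\sum_l \bar\pi_l/(-q_{ll})$ gives $\sum_{j=1}^N \bar\pi_j\,\tfrac{1}{-q_{jj}}\big(L(j)-\bar L\big)=S\,\bar L-S\,\bar L=\mathbf 0$. Subtracting this zero from the conditional expectation above replaces each weight $\mathbb P(\sigma_{\tau_r}=j\mid x(\tau_r))$ by its deviation from $\bar\pi_j$. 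Taking $\|\cdot\|_\infty$ and applying the triangle inequality, I would then bound the three factors uniformly: the deviation of the mode law by $M_0\kappa^r$ from $\eqref{Markov_pi}$, the mean sojourn $\tfrac{1}{-q_{jj}}$ by $\max_j\tfrac{1}{-q_{jj}}$, and $\|L(j)-\bar L\|_\infty\le\|L(j)\|_\infty+\|\bar L\|_\infty\le 2\max_i\|L(i)\|_\infty$ (using that $\bar L$ is a convex combination of the $L(i)$). Summing the $N$ terms reproduces the right-hand side of $\eqref{est-ave-L}$; running the identical argument with $D(\cdot)$ and $\bar D$ in place of $L(\cdot)$ and $\bar L$ gives $\eqref{est-ave-D}$.

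The step I expect to be the main obstacle is the conditioning on $x(\tau_r)$: the bound $\eqref{Markov_pi}$ is stated for the \emph{marginal} distribution $\pi_j(r)$ of the embedded chain, whereas what appears above is the \emph{conditional} law $\mathbb P(\sigma_{\tau_r}=j\mid x(\tau_r))$, and these differ because $x(\tau_r)$ is a functional of the mode path on $[0,\tau_r)$ and is therefore correlated with $\sigma_{\tau_r}$ through $\sigma_{\tau_{r-1}}$. To make the estimate rigorous I would argue that $x(\tau_r)$ is measurable with respect to the pre-jump history, so that by the Markov property the one-step-ahead mode law is governed by the embedded transition matrix $P$, and the geometric contraction of $P^r$ toward the rank-one limit $\mathbf 1\bar\pi$ (the source of $M_0$ and $\kappa<1$) transfers to the conditional law uniformly in $x(\tau_r)$. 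The remaining computations are the routine norm estimates sketched above.
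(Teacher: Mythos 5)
Your proof is correct and follows essentially the same route as the paper's: collapse the integral to $\Delta_r\,(L(\sigma_{\tau_r})-\bar{L})$, use that the conditional mean sojourn in state $j$ is $\frac{1}{-q_{jj}}$, and combine the definition of $\bar{L}$ with the geometric ergodicity bound \eqref{Markov_pi}; your subtraction of the stationary-weighted zero $\sum_{j}\bar{\pi}_j\frac{1}{-q_{jj}}\big(L(j)-\bar{L}\big)=\mathbf{0}$ is just a cleaner bookkeeping of the same cancellation that the paper performs by rewriting $\bar{L}\,\mathbb{E}(\Delta_r)$ against the normalizer $\sum_{l}\bar{\pi}_l/(-q_{ll})$, and it yields the constant $2N$ somewhat more transparently. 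The conditioning subtlety you flag (the law of $\sigma_{\tau_r}$ given $x(\tau_r)$ versus the marginal $\pi_j(r)$) is a genuine issue, but the paper silently elides it by identifying the two in its first displayed equality, so your proof is no weaker than the paper's on this point.
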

\begin{proof}
By (\ref{Markov_pi}) and the definition of $\bar{L}$, we have
\begin{small}
\begin{align*}
&\left\|\mathbb E\left[\int_{\tau_r}^{\tau_{r+1}}(L(\sigma_t)-\bar{L})dt\Big|x(\tau_r)\right]\right\|_{\infty} \\
=&\left\|
\sum_{j=1}^m \pi_{j}(r)L(j) \frac{1}{-q_{jj}} -\bar{L} \mathbb E(\Delta_r)\right\|_{\infty}\\
=& \left\| \sum_{j=1}^m \pi_j(r)L(j) \frac{1}{-q_{jj}}-\sum_{j=1}^m \bar{\pi}_j L(j) \frac{1}{-q_{jj}}\frac{\sum_{j=1}^m \pi_j(r)\frac{1}{-q_{jj}}}{\sum_{j=1}^m \bar{\pi}_j\frac{1}{-q_{jj}}}\right\|_{\infty}\\
\le & 2N\cdot M_0 \max_{j}\frac{1}{-q_{jj}}\max_{j}\|L(j)\|_{\infty}\kappa^r
\end{align*}
\end{small}
Similarly, the second equality (\ref{est-ave-D}) can be derived.
\end{proof}

\begin{definition}
The coupled system is said to be stable at $s(t)$ in mean
square sense, if
$$
\lim_{t\to+\infty}\mathbb{E}\bigg[\|x_{i}(t)-s(t)\|^2\bigg]=0,~i=1,\cdots,m.
$$
\end{definition}

\section{Stability of event triggered algorithms}\label{sec-con}
The stability of the fast switching system heavily relates to an average system, which is a linearly coupled system with coupling matrix $\bar{L}$ and pinned matrix $\bar{D}$. Based on the average matrices, we define the state measurement error for system (\ref{cds}) by
\begin{small}
\begin{align}\nonumber
&e_i(t) =\sum_{j}\bar{L}_{ij}\Gamma\left[x_j(t)-x_j(t_{k_j(t)}^j)-x_i(t)+x_i(t_{k_i(t)}^i)\right]\\ \label{error-1}
&+\epsilon \bar{D}_i\Gamma\left[x_i(t)-s(t)-x_i(t_{k_i(t)}^i)+s(t_{k_i(t)}^i)\right],~i=1,\cdots,m.
\end{align}
\end{small}

In the following, we will demonstrate that if the topologies and pinned node sets switch fast enough and the so-called average system is stable, then the switching system with event-triggered algorithms is stable.
\begin{theorem}\label{thm1}
Suppose $f$ satisfies assumption 1 and  there exists a positive definite matrix $P$ such that $\{P[\alpha I_m -c\bar{L}-c\epsilon \bar{D}]\otimes G\Gamma\}^{s}$ is negative semi-definite. Let $\underline{\lambda}=\underline{\lambda}\left(P\otimes G\right)$ and $\bar{\lambda}=\bar{\lambda}\left(P\otimes G\right)$. Pick a constant $\beta$ satisfying $0<\beta'<\beta$. Set
$t_{k+1}^i$ as the time point defined by the following rule
\begin{seqnarray}\label{event1-1}
\tilde{t}_{k+1}^i &= &\max_{\tau>t_k^i}{\left\{\tau:  \|e_i(\tau)\|  \le \frac{\beta'\underline{\lambda}}{c\overline{\lambda}}\|{x}_i(\tau)-s(\tau) \| \right\}},\\\label{event1-2}
t_{k+1}^i &=& \min\{\tilde{t}_{k+1}^i, t_k^i + T\}.
\end{seqnarray}
If $q_{jj} $ satisfies
\begin{seqnarray} \label{switch-speed-4-rule-1}
-(\beta-\beta')\min_j\frac{1}{-q_{jj}} \cdot e^{\rho_1 \max_{j}\frac{1}{-q_{jj}}}+[2\rho_2(\beta-\beta')+K_1]\\ \nonumber
 \cdot\max_{j}\frac{1}{q^2_{jj}}\cdot M_1(T,\max_{j}\frac{1}{-q_{jj}})
+K_2\cdot \max_{j}\frac{1}{q^2_{jj}}\cdot  e^{\rho_1 \max_{j}\frac{1}{-q_{jj}}}\\ \nonumber
\cdot\left[1+\rho_2 \max_{j}\frac{1}{-q_{jj}}\cdot  M_1(T,\max_{j}\frac{1}{-q_{jj}})\right]
<0
\end{seqnarray}
where $T$ satisfies $h(T)= e^{-\rho_1 T} - \rho_2 T>0$ and
\begin{small}
\begin{align}\nonumber
R(\sigma_t) &= c\bar{L}-cL(\sigma_t)+c\epsilon \bar{D}-c\epsilon D(\sigma_t)],\\\nonumber
A(\sigma_t) &= \{P(c\bar{L}- cL(\sigma_t)-c\epsilon\bar{D}+c\epsilon D(\sigma_t))\otimes G\Gamma\}^{s},\\ \nonumber
m_0 &=c\cdot  \max_{i,j} \left(L_{ii}(j)+\epsilon D_{i}(j)\right),\\ \nonumber
m_1 &= c\cdot\max_{i,j,q}\left\{\left|L_{ij}(q)-\bar{L}_{ij}\right|,
\epsilon\left|D_i(q)-\bar{D}_i\right|\right\},\\ \nonumber
K_1 &= \Big\{m^2\cdot m_0\max_{j}\|P\cdot R(j)\|_{\infty}
+ m_1 \|P\|_{\infty}[3m(m+1) L_f^2\\ \nonumber
&~~~~+(m+m^2+m_0m(m+1)^2)\ovlam\left(\Gamma^{\top}GG\Gamma\right)\\\nonumber
&~~~~+(m_0+1)(m+m^2)\ovlam\left(G\Gamma\Gamma^{\top}G\right)
\Big\}\frac{1}{\unlam},\\ \nonumber
K_2 &= \Big\{\max_j\ovlam\left(A(j)A^{\top}(j)\right)+ L_f^2 + 2m\cdot m_0 \max_{j}\|PR(j)\|_{\infty}\\ \nonumber 
&~~~~~\cdot\ovlam(\{G\Gamma\}^{s}\Gamma\Gamma^{\top}
\{G\Gamma\}^{s})+ 3m_0m(m+1)^2+m_1 \|P\|_{\infty}\\ \nonumber
&~~~~\cdot\big[(m+m^2+m_0m(m+1)^2)\ovlam\left(\Gamma^{\top}GG\Gamma\right)\big]\Big\}\frac{1}{\unlam}, \\ \nonumber
\rho_1 &= 2\left|\alpha\frac{\overline{\lambda}\left(\left\{G\Gamma\right\}^{s}\right)}
{\underline{\lambda}(G)}-\beta\right|+2m_0\frac{\overline{\lambda}
\left(G\Gamma\Gamma^{\top}G\right)}{\underline{\lambda}(G)},\\ \nonumber 
\rho_2 &= \frac{\sum_{i}P_{ii}(cL_{ii}+c\epsilon D_i)}{\unlam},\\ \nonumber
M_1(T,\Delta) &= \max\big\{\frac{1}{h(T)}, [1-\rho_2\Delta \cdot e^{\rho_1\Delta}]^{-1} e^{\rho_1 \Delta}\big\},
\end{align}
\end{small}
then under the updating rule (\ref{event1-1}) and (\ref{event1-2}), system (\ref{cds}) is stable at the homogeneous trajectory $s(t)$ in the mean square sense.
\end{theorem}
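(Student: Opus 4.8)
The plan is to follow a quadratic Lyapunov functional along the synchronization error and to show that, between two consecutive jumps of $\sigma_t$, its conditional expectation contracts strictly once the chain switches fast enough. Three ingredients drive this: the negative-semidefinite hypothesis (stability of the time-average system), the event rule (\ref{event1-1})--(\ref{event1-2}) keeping each $e_i$ small relative to the state, and the averaging estimates (\ref{est-ave-L})--(\ref{est-ave-D}) saying the switched matrices integrate to their averages up to a $\kappa^r$-decaying error.

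First I would set $\delta_i(t)=x_i(t)-s(t)$, stack $\delta=[\delta_1^\top,\dots,\delta_m^\top]^\top$, and take $V(t)=\frac12\,\delta(t)^\top(P\otimes G)\delta(t)$, so that $\unlam\|\delta\|^2\le 2V\le\ovlam\|\delta\|^2$. Subtracting $\dot s=f(s,t)$ from (\ref{cds}) and adding and subtracting $\alpha\Gamma\delta_i$ together with the continuous average coupling $-c(\bar{L}\otimes\Gamma)\delta-c\epsilon(\bar{D}\otimes\Gamma)\delta$, the error dynamics splits into: (i) the per-node ${\rm QUAD}$ part $f(x_i,t)-f(s,t)-\alpha\Gamma\delta_i$; (ii) the average-system linear part driven by $[\alpha I_m-c\bar{L}-c\epsilon\bar{D}]\otimes\Gamma$; (iii) the event-measurement error $c\,e(t)$ with $e=[e_1^\top,\dots,e_m^\top]^\top$ from (\ref{error-1}); and (iv) the switching fluctuation whose contribution to $\dot V$ is exactly $\delta^\top A(\sigma_t)\delta$ with $A(\sigma_t)$ as in the statement.

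Differentiating $V$, part (ii) gives $\delta^\top\{P[\alpha I_m-c\bar{L}-c\epsilon\bar{D}]\otimes G\Gamma\}^s\delta\le0$ by hypothesis; the diagonal of part (i) is bounded by ${\rm QUAD}(G,\alpha\Gamma,\beta)$ to $-2\beta V$, with off-diagonal and cross nonlinear terms Lipschitz-controlled through $L_f$ (Assumption~1) and absorbed into $K_1,K_2$; and part (iii) is handled by Cauchy--Schwarz with the event bound $\|e_i(t)\|\le\frac{\beta'\unlam}{c\ovlam}\|\delta_i(t)\|$, returning at most $+2\beta'V$. Thus between jumps $\dot V\le-2(\beta-\beta')V+\delta^\top A(\sigma_t)\delta$ plus $L_f$-corrections, while a crude companion bound $\dot V\le\rho_1 V+\cdots$ together with the cap $t_{k+1}^i\le t_k^i+T$ and $h(T)=e^{-\rho_1T}-\rho_2T>0$ yields the within-interval amplification $V(t)\le M_1(T,\Delta_r)V(\tau_r)$. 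Integrating over a sojourn interval $[\tau_r,\tau_{r+1}]$ and taking $\mathbb E[\cdot\mid x(\tau_r)]$, the decay contributes about $-2(\beta-\beta')\mathbb E[\Delta_r]V(\tau_r)$, of first order in $\max_j 1/(-q_{jj})$; for the fluctuation I would write $\delta(t)=\delta(\tau_r)+\eta(t)$ and expand $\int_{\tau_r}^{\tau_{r+1}}\delta^\top A(\sigma_t)\delta\,dt$ into a frozen term $\delta(\tau_r)^\top[\int A(\sigma_t)dt]\delta(\tau_r)$, of conditional size $O(\kappa^r\mathbb E[\Delta_r])V(\tau_r)$ by (\ref{est-ave-L})--(\ref{est-ave-D}) and Lemma~\ref{lemma-1}, plus corrections in $\eta$. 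Bounding $\|\eta(t)\|$ by Gronwall on the short interval and $\|A(\sigma_t)\|^2\le\max_j\ovlam(A(j)A^\top(j))$, these corrections are of second order, $O(\mathbb E[\Delta_r^2])=O(\max_j 1/q_{jj}^2)$, carrying the $e^{\rho_1\Delta_r}$ and $M_1$ factors; this is precisely where $K_1,K_2$ enter. The non-decaying part of the resulting per-interval coefficient is then exactly the left-hand side of (\ref{switch-speed-4-rule-1}), a negative first-order term against positive second-order $1/q_{jj}^2$ corrections, hence negative once switching is fast. I expect the main obstacle to be exactly this bookkeeping: disentangling the correlation between $\delta(t)$ and $\sigma_t$ inside an interval via the frozen-state split and the Gronwall control of $\eta$, while matching each term to the correct moment $\mathbb E[\Delta_r]$ or $\mathbb E[\Delta_r^2]$ of the exponential sojourn time.

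Finally, (\ref{switch-speed-4-rule-1}) gives $\mathbb E[V(\tau_{r+1})\mid x(\tau_r)]\le(1-\gamma)V(\tau_r)+b\kappa^rV(\tau_r)$ for some $\gamma\in(0,1)$, the $\kappa^r$ stemming from the frozen term. Taking total expectations and iterating, $\mathbb E[V(\tau_r)]\le\prod_{\ell<r}(1-\gamma+b\kappa^\ell)\,\mathbb E[V(\tau_0)]$; since $\sum_\ell\kappa^\ell<\infty$ the product stays comparable to $(1-\gamma)^r\to0$, so $\mathbb E[V(\tau_r)]\to0$ geometrically. The within-interval bound $V(t)\le M_1(T,\Delta_r)V(\tau_r)$ then gives $\lim_{t\to\infty}\mathbb E[V(t)]=0$, and since $2V\ge\unlam\|\delta\|^2\ge\unlam\|x_i(t)-s(t)\|^2$ for each $i$, system (\ref{cds}) is stable at $s(t)$ in the mean square sense.
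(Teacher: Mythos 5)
Your proposal is correct and follows essentially the same route as the paper: the same Lyapunov function $V(t)=\frac12\hat{x}^{\top}(P\otimes G)\hat{x}$, the same decomposition into the ${\rm QUAD}$ part, the negative semi-definite average-system part, the event-error part absorbed as $+2\beta' V$, and the fluctuation $A(\sigma_t)$ handled by a frozen-state split against the $\kappa^r$ averaging estimates, followed by sojourn-time moment bounds and a per-switch contraction of $\mathbb E[V(\tau_r)]$ plus the within-interval amplification bound. The only cosmetic difference is that you absorb the $\kappa^r$ perturbation through a convergent infinite product, whereas the paper simply notes the perturbation drops below any $\delta$ for $r>N_1(\delta)$ and contracts from there on; these are interchangeable.
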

{\em Brief~proof.}
Let $\hat{x}_i(t) \triangleq x_i(t) - s(t)$, then the dynamics of $\hat{x}_i(t)$ in $[t^{i}_{k}, t^{i}_{k+1})$ is
\begin{small}
\begin{eqnarray}\nonumber
\dot{\hat{x}}_{i}(t)
= f(\hat{x}_{i}(t)+s(t),t)-f(s(t),t)-c\sum_{j=1}^{m}L_{ij}(\sigma_t)\Gamma\\   \nonumber \label{cds2}
\cdot\left[\hat{x}_{j}(t)-\hat{x}_{i}(t)\right]
-c\epsilon D_i(\sigma_t)\Gamma\hat{x}_{i}(t)+\hat{e}_i(t)
\end{eqnarray}
\end{small}
with
\begin{small}
\begin{eqnarray} \nonumber
\hat{e}_i(t) = \sum_{j}L_{ij}(\sigma_t)\Gamma\left[x_j(t)-x_j(t_{k_j(t)}^j)-x_i(t)+x_i(t_{k_i(t)}^i)\right]\\ \label{error-2}
+\epsilon {D}_i(\sigma_t)\Gamma\left[x_i(t)-s(t)-x_i(t_{k_i(t)}^i)+s(t_{k_i(t)}^i)\right]
\end{eqnarray}
\end{small}
Let
\begin{align*}
\hat{x}(t) = [\hat{x}^{\top}_1(t),\cdots,\hat{x}^{\top}_m(t)]^{\top},~~\hat{e}(t) = [\hat{e}^{\top}_1(t),\cdots,\hat{e}^{\top}_m(t)]^{\top}
\end{align*}
 and $V(t) = \frac{1}{2} \hat{x}^{\top}(t)(P\otimes G)\hat{x}(t)$. By $f\in {\rm QUAD}(G,\alpha\Gamma,\beta)$, $ \|e_i(\tau)\|  \le \frac{\beta'\underline{\lambda}}{c\overline{\lambda}}\|{x}_i(\tau)-s(\tau) \|$, Cauchy-
Schwarz inequality, inequalities (\ref{est-ave-L}), (\ref{est-ave-D}) and Lemma \ref{lemma-1}, we have
\begin{small}
\begin{align}\label{est-V}
& \mathbb E\left[V(\tau_{r+1}) |\hat{x}(\tau_r)\right]- V(\tau_r)\\ \nonumber
\le &-2(\beta-\beta')\mathbb E\left[\Delta_r\cdot e^{-\rho_1\Delta_r}\right] V(\tau_{r})\\ \nonumber
&+ 2(\beta-\beta')\rho_2\cdot \mathbb E\left[ \Delta_r^2 \cdot M_1(T,\Delta_r) \right]V(\tau_{r})\\ \nonumber
&+[M_2+ M_3\cdot \mathbb E\left[M_1(T,\Delta_r)\right]]
\cdot C_0\cdot \kappa^r  V(\tau_{r})\\ \nonumber
&+ K_1 \cdot \mathbb E\left[\Delta_r^2\cdot  M_1(T,\Delta_r)\right] V(\tau_r)  \\ \nonumber
&+K_2\cdot \mathbb E\big[\Delta_r^2\cdot e^{\rho_1 \Delta_r}+\rho_2\Delta_r^3 \cdot e^{\rho_1 \Delta_r} M_1(T,\Delta_r)\big] V(\tau_r)
\end{align}
\end{small}
with
\begin{small}
\begin{align*}
C_0 &= 2N\cdot M_0 \max_{j}\frac{1}{-q_{jj}}\max_{i}\left\{\|L(i)\|_{\infty}, \|D(i)\|_{\infty}\right\},\\
M_2 &= \Big\{2mn(c+c\epsilon )\| P \|_{\infty} \|G\Gamma\|_{\infty} C_0 \\
&~~~~+ cm(m+\epsilon)\left(\ovlam\left(G\Gamma\Gamma^{\top}G\right)+2\right)\Big\}\frac{1}{\unlam},\\
M_3 &= 2cm(m+\epsilon)\frac{1}{\unlam}.
\end{align*}
\end{small}
 The property of Markov process gives
  \begin{small}
  \begin{align}\nonumber
    &\mathbb E\left[\Delta_r\right] = \sum_{j=1}^m \pi_{j}(r) \frac{1}{-q_{jj}} \le \max_j\frac{1}{-q_{jj}},\\\label{exp-sojourn-time-1}
    &\mathbb E\left[\Delta_r\cdot e^{-\rho_1\Delta_r}
  \right]\ge \min_j\frac{1}{-q_{jj}}e^{-\rho_1\max_j
  \frac{1}{-q_{jj}}},\\ \label{exp-sojourn-time-2}
  &\mathbb E\left[\Delta_r^2\cdot M_1(T,\Delta_r)\right]
  \le \max_j \frac{1}{q^2_{jj}}M_1(T,\max_j\frac{1}{-
  q_{jj}}),
  \end{align}
  \end{small}
and
\begin{small}
\begin{align}\label{exp-sojourn-time-3}
&\mathbb E\left[\Delta_r^2\cdot e^{\rho_1 \Delta_r}+\rho_2\Delta_r^3 \cdot e^{\rho_1 \Delta_r} M_1(T,\Delta_r)\right]\\ \nonumber
\le & \max_j\frac{1}{q^2_{jj}}\cdot e^{\rho_1 \max_j\frac{1}{-q_{jj}}}\left[1+\rho_2 \max_{j}\frac{1}{-q_{jj}}\cdot  M_1(T,\max_{j}\frac{1}{-q_{jj}})\right].
\end{align}
\end{small}
By $\kappa<1$, we have that for any $0<\delta<1$, there exists $N_1(\delta)$, such that for any $r>N_1(\delta)$,
\begin{small}
\begin{equation}\label{est-switch-times}
\left[M_2 + M_3\cdot M_1\left(T,\max_j\frac{1}{-q_{jj}}\right)\right]\cdot \kappa^r \le {\delta}.
\end{equation}
\end{small}
Apply assumption (\ref{switch-speed-4-rule-1}) and estimations (\ref{exp-sojourn-time-1})-(\ref{est-switch-times}) to (\ref{est-V}), we get that there exists some $\delta_0>0$ such that for all $0<\delta<\delta_0$,
\begin{small}
\begin{align}\label{con-est-V-5}
&\mathbb E\left[V(\tau_{r+1}) |\hat{x}(\tau_r)\right]- V(\tau_r) \\ \nonumber
\le& -(\beta-\beta')\min_j\frac{1}{-q_{jj}}\cdot e^{-\rho_1\max_j \frac{1}{-q_{jj}}}V(\tau_r).
\end{align}
\end{small}
Take expectation on both sides of (\ref{con-est-V-5}),
\begin{small}
\begin{align*}
&\mathbb E\left[V(\tau_{r+1})\right]\\
\le& \left[1-(\beta-\beta')\min_j\frac{1}{-q_{jj}}\cdot e^{-\rho_1\max_j \frac{1}{-q_{jj}}}\right]\mathbb E\left[V(\tau_r)\right].
\end{align*}
\end{small}
 which implies for $r>N_1(\delta)$, $\mathbb E\left[V(\tau_{r})\right]$ decreases as $r$ increases. By the assumption on $f$, it can be calculated that
\begin{sequation}\label{con-est-V-6}
V(t) \le e^{\rho_1 \Delta_r}V(\tau_{r}) + \rho_2 e^{\rho_1 \Delta_r}M_1(T,\Delta_r)V(\tau_r)
\end{sequation}
Take expectation on both sides of (\ref{con-est-V-6}), we have
\begin{small}
\begin{align*}
\mathbb E\left[V(t)\right]
\le \mathbb E \left[V(\tau_{r})\right] e^{\rho_1\max_j\frac{1}{-q_{jj}}}
 \left[1+ \rho_2 M_1(T, \max_{j}\frac{1}{-q_{jj}})\right].
\end{align*}
\end{small}
Combing with $\lim_{r\in\mathbb N, r\to+\infty}\mathbb E\left[V(\tau_{r})\right]=0$, we have $\mathbb E[V(t)]$ converges to 0 as time goes to infinity. From the non-negative property of $V(t)$, we can conclude that the system is stable at $s(t)$ in the mean square sense. This completes the proof.
\qed
\begin{remark}
Noting the sojourn time in the subsystem with coupling matrix $L(j)$ and pinned matrix $D(j)$ is exponentially distributed with parameter $-q_{jj}$, $j\in\mathbb S$. Therefore, for any $j\in\mathbb S$, the larger $-q_{jj}$ is, the shorter the sojourn time in the subsystem with matrices $L(j), D(j)$ is. 
By Theorem \ref{thm1}, it can be seen that to satisfy condition (\ref{switch-speed-4-rule-1}), the parameters $-q_{jj}, j=1,\cdots,N$ should be sufficiently large, which induces fast switching among subsystems.
\end{remark}
\begin{remark}

By Theorem \ref{thm1}, it is clear that the stability of the fast switching system closely relates to the stability of the average system. If the invariant distribution of the Markov process, denoted by $\pi$, satisfies $\pi>0$, then the network topology of the average system is the union of all possible graph topologies in the switching system. In \cite{Chen07}, it was proved that if the network topology is strongly connected, the linearly coupled system can be stabilized by a single pinned controller. Here it can be derived that if the union of all possible graph topologies is strongly connected and the invariant distribution of the Markov process satisfies $\pi>0$, then under sufficiently fast switching, system (\ref{cds}) with event-triggered diffusions and pinned terms is stable.
\end{remark}
\begin{remark}
To assure the decreasing of $\mathbb E \left[V(t)\right]$, an upper bound of the inter-event interval, denoted by $T$, is imposed to all nodes. Practically, if the time calculated from rule (\ref{event1-1}) is $T$ time longer than current triggered time, an externally triggering will be applied to the node to guarantee $t_{k+1}^i-t_k^i\le T$ hold for all $k$ and $i$.
\end{remark}

In the above theorem, the updating rule (\ref{event1-1}) and (\ref{event1-2}) is defined by $e_i(t), i=1,\cdots,m$ in (\ref{error-1}). Next, we propose another updating rule based on $\hat{e}_i(t), i=1,\cdots,m$ in (\ref{error-2}) to stabilize the system.
\begin{theorem}\label{thm2}
Suppose all assumptions in Theorem \ref{thm1} hold. For $0<\beta'<\beta$, set $t_{k+1}^i$ as the time point defined by the following rule
\begin{seqnarray}\label{event2}
t_{k+1}^i =\max_{\tau>t_k^i}{\left\{\tau:  \|\hat{e}_i(\tau)\|  \le \frac{\beta'\underline{\lambda}}{c\overline{\lambda}}\|{x}_i(\tau)-s(\tau) \| \right\}}
\end{seqnarray}
where $\hat{e}_i(t)$ is defined in (\ref{error-2}).
If $q_{jj}$ satisfies
\begin{align}\nonumber
&-(\beta-\beta')\min_j\frac{1}{-q_{jj}} \cdot e^{\rho_1' \max_{j}\frac{1}{-q_{jj}}}\\ \nonumber
&+K_1'\max_{j}\frac{1}{-q_{jj}}\cdot  e^{\rho_1' \max_{j}\frac{1}{-q_{jj}}}
<0
\end{align}
where
\begin{small}
\begin{align*}
K_1' =& \Big[L_f^2+(c+1)\max_{j}\ovlam\left(A(j)A^{\top}(j)\right)\\\nonumber
&+ mn \max_j\left\|A(j)(\left(cL(j)+c\epsilon D(j)\right)\right\|_{\infty}\Big]\frac{1}{\unlam}
+\frac{\beta'^2\unlam}{c\ovlam^2},\\
\rho_1' =& 2(\beta+\beta')+2\max_{i,j}\Big|\lambda_j\big(\{P(\alpha I_m + cL(i)\\ \nonumber
&+c\epsilon D(i))\otimes G\Gamma\}^{s}\big)\Big|\frac{1}{\unlam}
\end{align*}
\end{small}
then under the updating rule (\ref{event2}), system (\ref{cds}) is stable at the homogeneous trajectory $s(t)$ in the mean square sense.
\end{theorem}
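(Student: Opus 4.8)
The plan is to follow the Lyapunov argument of Theorem \ref{thm1} verbatim in its skeleton, but to replace the averaged measurement error $e_i(t)$ by the instantaneous error $\hat{e}_i(t)$ of (\ref{error-2}), exploiting the crucial fact that $\hat{e}_i(t)$ is \emph{exactly} the perturbation entering the closed-loop error dynamics of $\hat{x}_i(t)=x_i(t)-s(t)$ derived in the proof of Theorem \ref{thm1}. With $\hat{x}$ and $\hat{e}$ stacked and $V(t)=\frac{1}{2}\hat{x}^{\top}(t)(P\otimes G)\hat{x}(t)$, I would first differentiate $V$ along that dynamics and split the instantaneous coupling/pinning operator $-c(L(\sigma_t)\otimes\Gamma)-c\epsilon(D(\sigma_t)\otimes\Gamma)$ as average plus deviation. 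The average part, combined with the $\alpha\Gamma$ term supplied by $f\in{\rm QUAD}(G,\alpha\Gamma,\beta)$, reconstitutes $\{P[\alpha I_m-c\bar{L}-c\epsilon\bar{D}]\otimes G\Gamma\}^{s}$, which is negative semi-definite by hypothesis and hence contributes nothing positive, while the leftover deviation is collected into the symmetric matrix $A(\sigma_t)$ appearing in the statement.

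The decisive simplification over Theorem \ref{thm1} is that the triggering rule (\ref{event2}) constrains $\hat{e}_i$ directly, so the cross term $\hat{x}^{\top}(P\otimes G)\hat{e}$ can be dominated by a multiple of $V$ through $\|\hat{e}_i(\tau)\|\le\frac{\beta'\underline{\lambda}}{c\bar{\lambda}}\|x_i(\tau)-s(\tau)\|$ and Cauchy--Schwarz, with \emph{no} appeal to the averaging estimates (\ref{est-ave-L})--(\ref{est-ave-D}); this is exactly why the resulting condition carries no $\kappa^r$ factor. Here the deviation $A(\sigma_t)$ is merely bounded pointwise (the chain is constant on each sojourn) rather than being averaged out, which explains the role of $\bar{\lambda}(A(j)A^{\top}(j))$ in $K_1'$. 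Assembling the ${\rm QUAD}$ estimate, the error bound, the pointwise bound on $\hat{x}^{\top}A(\sigma_t)\hat{x}$, and the residual quadratic terms (controlled by Lemma \ref{lemma-1} and the $\|\cdot\|_{\infty}$ norms of $A(j)$, $L(j)$, $D(j)$) yields a pointwise inequality of the shape $\dot{V}\le-2(\beta-\beta')V+K_1'V$ within a sojourn, together with the conservative within-sojourn growth exponent $\rho_1'$ that also furnishes the Gronwall bound $V(t)\le e^{\rho_1'\Delta_r}V(\tau_r)$.

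Next I would pass to the embedded chain $\{\tau_r\}$: integrate over a sojourn interval $[\tau_r,\tau_{r+1})$, take conditional expectation given $\hat{x}(\tau_r)$, and insert the sojourn-time moment bounds in the spirit of (\ref{exp-sojourn-time-1})--(\ref{exp-sojourn-time-2}), namely $\mathbb{E}[\Delta_r]\le\max_j\frac{1}{-q_{jj}}$ and $\mathbb{E}[\Delta_r e^{-\rho_1'\Delta_r}]\ge\min_j\frac{1}{-q_{jj}}e^{-\rho_1'\max_j\frac{1}{-q_{jj}}}$. This produces $\mathbb{E}[V(\tau_{r+1})\mid\hat{x}(\tau_r)]-V(\tau_r)$ bounded by the bracketed quantity in the hypothesis on $q_{jj}$ times $V(\tau_r)$; the hypothesis makes that bracket strictly negative, so $\mathbb{E}[V(\tau_r)]$ is a strictly contracting geometric sequence tending to $0$. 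Finally, the Gronwall bound gives $\mathbb{E}[V(t)]\le\mathbb{E}[V(\tau_r)]\,e^{\rho_1'\max_j\frac{1}{-q_{jj}}}$, extending convergence from the jump times to all $t$, and non-negativity of $V$ together with $\underline{\lambda}(P\otimes G)>0$ delivers mean-square stability at $s(t)$.

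The main obstacle I anticipate is the quadratic-form bookkeeping in the second step: one must produce clean bounds for $\hat{x}^{\top}A(\sigma_t)\hat{x}$ and for the cross terms between $A(\sigma_t)$ and the instantaneous $cL(j)+c\epsilon D(j)$ so that every residual assembles \emph{exactly} into the stated $K_1'$ and $\rho_1'$. In particular it must be verified that the ${\rm QUAD}$ inequality, phrased through $G$, is compatible with the general positive-definite weight $P$ in $P\otimes G$ --- this is where the eigenvalue ratio $\underline{\lambda}/\bar{\lambda}$ enters --- and that the triggering threshold is absorbed with the correct coefficient $\frac{\beta'^2\underline{\lambda}}{c\bar{\lambda}^2}$ recorded in $K_1'$.
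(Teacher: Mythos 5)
Your overall skeleton (the Lyapunov function $V(t)=\tfrac12\hat x^{\top}(P\otimes G)\hat x$, passage to the embedded chain $\{\tau_r\}$, sojourn-time moment bounds, contraction of $\mathbb E[V(\tau_r)]$, and a Gronwall bound to extend from jump times to all $t$) is the same as the paper's Theorem~\ref{thm1} argument, which is exactly what the omitted proof of Theorem~\ref{thm2} is declared to follow. The genuine gap is your treatment of the topology-deviation term: you claim that under rule (\ref{event2}) the averaging estimates (\ref{est-ave-L})--(\ref{est-ave-D}) can be skipped and ``the deviation $A(\sigma_t)$ is merely bounded pointwise.'' That step fails. $A(\sigma_t)$ is \emph{not} pointwise small: the hypotheses only make the \emph{average} system $\{P[\alpha I_m-c\bar L-c\epsilon\bar D]\otimes G\Gamma\}^{s}$ negative semi-definite, and the whole setting allows every individual subsystem to be unstable, so $\hat x^{\top}A(\sigma_t)\hat x$ can exceed $2(\beta-\beta')V$ for every value of $\sigma_t$. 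With only a pointwise bound you obtain $\dot V\le[-2(\beta-\beta')+C]V$ with $C$ of the order of $\max_j\ovlam\bigl(A(j)A^{\top}(j)\bigr)/\unlam$, and integrating over sojourns yields contraction only if $C<2(\beta-\beta')$ --- a condition that fast switching can never create: rescaling $Q\mapsto\lambda Q$ shrinks every sojourn by $1/\lambda$ and multiplies \emph{both} the negative and the positive per-sojourn contributions by essentially $1/\lambda$, so the sign of your per-sojourn estimate is asymptotically independent of the switching speed. The fast-switching mechanism lives precisely in the averaging step you discarded: since $\sum_j\pi_jA(j)=0$ for the invariant distribution, conditioning on $x(\tau_r)$ gives $\bigl\|\mathbb E[\Delta_r A(\sigma_{\tau_r})\mid x(\tau_r)]\bigr\|=\bigl\|\sum_j\pi_j(r)A(j)\tfrac{1}{-q_{jj}}\bigr\|=O(\kappa^r)$, which is Lemma~2. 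The correct adaptation of the paper's proof writes $\hat x(t)=\hat x(\tau_r)+O(\Delta_r)$ inside $\int_{\tau_r}^{\tau_{r+1}}\hat x^{\top}A(\sigma_t)\hat x\,dt$, kills the leading term by this cancellation (absorbed as a transient via the large-$r$ argument (\ref{est-switch-times})), and lets only the $O(\Delta_r^2)$ drift remainders survive into the constants --- which is where the products $A(j)\bigl(cL(j)+c\epsilon D(j)\bigr)$ and $A(j)A^{\top}(j)$ inside $K_1'$ actually originate, as bounds on $\dot{\hat x}$ against $A(j)$, not as pointwise bounds on the deviation itself.

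A secondary misreading feeds this error: you argue that the absence of a $\kappa^r$ factor in Theorem~\ref{thm2}'s condition shows that no averaging is invoked, but Theorem~\ref{thm1}'s condition (\ref{switch-speed-4-rule-1}) carries no $\kappa^r$ factor either; in both theorems the $\kappa^r$ terms appear in the intermediate estimate and are absorbed by taking $r>N_1(\delta)$, independently of which triggering rule is used. What rule (\ref{event2}) genuinely buys is different and more modest: (i) since $\hat e_i$ is exactly the perturbation entering the closed-loop dynamics (a point you correctly identify), the cross term $\hat x^{\top}(P\otimes G)\hat e$ is controlled directly, eliminating the $m_1$-type terms of Theorem~\ref{thm1} that compensate the mismatch between $e_i$ and $\hat e_i$; and (ii) no cap $T$ on inter-event intervals is needed, hence no $h(T)$ or $M_1(T,\cdot)$ factors. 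If you restore the Lemma~2 averaging step for the $A(\sigma_t)$ term and keep the rest of your outline, the argument aligns with the paper's intended proof.
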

The proof of this Theorem is similar to that of Theorem \ref{thm1}. Here it is omitted for saving space.
\begin{remark}
Comparing the updating rule (\ref{event1-1}) and (\ref{event1-2}) and rule (\ref{event2}), it can be seen that the upper bound of triggered time intervals of each node, denoted by $T$ in (\ref{event1-2}), is no longer required in the updating rule (\ref{event2}) defined by $\hat{e}_i(t)$. However, as a trade off, the triggering events under rule (\ref{event2}) happen more frequently than rule (\ref{event1-1}) and (\ref{event1-2}) if the switching among topologies is sufficiently fast. Since $\hat{e}_i(t)$ is decided by the varying coupling matrix $L(\sigma_t)$ and pinned matrix $D(\sigma_t)$, which jumps at the time point when the topology switches. Therefore, for systems with updating rule (\ref{event2}), each node has to update its state information at the switching time points of the topologies. If the switching among topologies is sufficiently fast, the triggered times of all nodes will substantially increase.
\end{remark}

Similar to work \cite{Dimarogonas}, under the above two updating rules (\ref{event1-1}), (\ref{event1-2}) and (\ref{event2}), there exists at least one node with its next inter-event interval being strictly positive. Under some special hypothesis, the Zeno behavior \cite{Johansson} can be excluded for all nodes.
\begin{proposition}\label{proposition-1}
Suppose all hypotheses of Theorem 1 hold.
\begin{enumerate}
  \item Under either of the updating rule (\ref{event1-1}) and (\ref{event1-2}) and rule
      (\ref{event2}), if the system does not converge, there exists at least one node $i$ such that its next inter-event interval is strictly positive.
  \item Suppose there exists some $\eta$ (possibly negative) such that
\begin{equation}\nonumber
  (u-v)^{\top}(f(u)-f(v))\ge \eta (u-v)^{\top}(u-v)
\end{equation}
  holds for all $u, v \in \mathbb R^n$. If there exists a constant $b>0$ such that
  $\|\hat{x}_i(t)\|^2\ge b V(t)$ holds for some $i\in\{1, 2,\cdots, m\}$, then the next inter-event interval of node $i$ is strictly positive and lower bounded
  by a common constant.
\end{enumerate}
\end{proposition}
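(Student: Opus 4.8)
Plan. I would treat the two parts separately, since the first is a soft contradiction argument while the second is a quantitative inter-event estimate.

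For part (1) I would argue by contradiction. Since the system has not converged, there is a time $t$ with $\hat{x}(t)\neq 0$, hence $\|\hat{x}_{i_0}(t)\|>0$ for some node $i_0$. Suppose, contrary to the claim, that \emph{every} node has next inter-event interval equal to zero, i.e. every node re-samples at the current instant. Then for every $j$ one has $x_j(t_{k_j(t)}^j)=x_j(t)$, so each neighbour's sampling increment $x_j(t)-x_j(t_{k_j(t)}^j)$ vanishes, and likewise the pinning increment $x_i(t)-s(t)-x_i(t_{k_i(t)}^i)+s(t_{k_i(t)}^i)$ vanishes. Feeding these into (\ref{error-1}) and (\ref{error-2}) forces $e_{i_0}(t)=\hat e_{i_0}(t)=0$ on a right-neighbourhood of $t$. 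But then the triggering inequality $\|\hat e_{i_0}\|\le \frac{\beta'\underline{\lambda}}{c\overline{\lambda}}\|\hat x_{i_0}\|$ holds \emph{strictly} there, the right-hand side being positive while the left-hand side is $0$. By the definition of $t^{i_0}_{k+1}$ as the last time the inequality is preserved, node $i_0$ need not re-trigger, contradicting the assumption. Hence at least one node has a strictly positive next inter-event interval.

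For part (2), fix the node $i$ with $\|\hat x_i(t)\|^2\ge bV(t)$ and let $t_0=t_k^i$; I want a uniform lower bound on $t_{k+1}^i-t_0$. The trigger fires when $\phi(t)=\|\hat e_i(t)\|/\|\hat x_i(t)\|$ reaches $\frac{\beta'\underline{\lambda}}{c\overline{\lambda}}$, so it suffices to bound the growth rate $\dot\phi$ and to start $\phi$ strictly below the threshold. First I would record that the event rules keep $\|\hat e_j(t)\|\le \frac{\beta'\underline{\lambda}}{c\overline{\lambda}}\|\hat x_j(t)\|$ for \emph{every} node at all times, whence $\|\hat e(t)\|\le \frac{\beta'\underline{\lambda}}{c\overline{\lambda}}\|\hat x(t)\|$. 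The hypothesis $\|\hat x_i\|^2\ge bV$ together with $\underline{\lambda}\|\hat x\|^2\le 2V$ yields $\|\hat x(t)\|\le \sqrt{2/(\underline{\lambda}\, b)}\,\|\hat x_i(t)\|$, so the \emph{whole-network} state and error are both dominated by $\|\hat x_i\|$. Differentiating $\hat e_i$ and using Assumption 1 on $f$, the boundedness of $L(\sigma_t),D(\sigma_t)$ and the closed-loop equation for $\dot{\hat x}$, I would obtain $\|\dot{\hat e}_i(t)\|\le C_3\|\hat x_i(t)\|$ and $\|\dot{\hat x}_i(t)\|\le C_5\|\hat x_i(t)\|$ with $C_3,C_5$ depending only on the system data, so that $\dot\phi\le C_3+C_5\phi\le C_6$ while $\phi\le \frac{\beta'\underline{\lambda}}{c\overline{\lambda}}$. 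The one-sided estimate $(u-v)^\top(f(u)-f(v))\ge\eta(u-v)^\top(u-v)$ then gives $\frac{d}{dt}\|\hat x_i\|^2\ge -2\gamma\|\hat x_i\|^2$ for a constant $\gamma$ absorbing $\eta$, the coupling gains and the error bound above, hence $\|\hat x_i(t)\|\ge \|\hat x_i(t_0)\|e^{-\gamma(t-t_0)}>0$; thus $\phi$ stays well defined and the threshold does not degenerate. Integrating $\dot\phi\le C_6$ from the post-trigger value up to the threshold yields $t_{k+1}^i-t_0\ge \tau^\ast$ with $\tau^\ast>0$ a common constant depending only on $C_3,C_5,C_6,\gamma,\beta',\underline{\lambda},\overline{\lambda},c$.

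The main obstacle is hidden in the phrase ``start $\phi$ strictly below the threshold.'' Because of the diffusive coupling, when node $i$ triggers only its own sampling error and pinning error are zeroed, while the neighbours' increments persist, so that $\hat e_i(t_0^+)=\Gamma\sum_{j\neq i}L_{ij}(\sigma_{t_0})\bigl[x_j(t_0)-x_j(t^j_{k_j(t_0)})\bigr]$ is generally nonzero and may even exceed the pre-trigger magnitude. Thus $\phi$ does \emph{not} reset to $0$, and the crux is to certify that $\phi(t_0^+)$ sits below $\frac{\beta'\underline{\lambda}}{c\overline{\lambda}}$ by a margin independent of $t_0$. This is precisely where the domination $\|\hat x\|\le \sqrt{2/(\underline{\lambda}\, b)}\|\hat x_i\|$, the persistent bound $\|\hat e_j\|\le \frac{\beta'\underline{\lambda}}{c\overline{\lambda}}\|\hat x_j\|$, and the decay of $V$ guaranteed by Theorem \ref{thm1} (which controls the neighbours' drift $x_j(t_0)-x_j(t^j_{k_j(t_0)})$) must be combined to bound the surviving neighbour term strictly below the threshold. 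Once that uniform margin is in hand, the remaining integration of the linear differential inequality for $\phi$ is routine.
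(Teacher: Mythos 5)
The paper never gives its own proof of Proposition~\ref{proposition-1} (it is ``omitted here for saving space''), so your attempt can only be judged on its own merits, not compared line by line with the authors' argument. Part (1) of your proposal stands: it is the standard contradiction-plus-continuity argument. If every node's next inter-event interval were zero, then every sampling increment $x_j(t)-x_j(t^j_{k_j(t)})$ and every pinning increment would vanish at the current instant $t$, hence $e_{i_0}(t)=\hat e_{i_0}(t)=0$ for a node $i_0$ with $\hat x_{i_0}(t)\neq 0$; since both errors are linear combinations of the (continuous) increments with coefficients bounded uniformly over the finite topology set, they stay small on a right-neighbourhood of $t$ while the threshold $\frac{\beta'\underline{\lambda}}{c\overline{\lambda}}\|\hat x_{i_0}\|$ stays bounded away from zero, so $i_0$ cannot be forced to re-trigger instantly --- a contradiction. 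Your phrasing that the errors vanish ``on a right-neighbourhood of $t$'' is inaccurate (they vanish \emph{at} $t$ and are merely small nearby), but this does not harm the argument.

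Part (2), however, is not a proof, and you concede as much: you correctly isolate the decisive difficulty --- after node $i$'s own trigger at $t_0$ only its self-sampling and pinning increments reset, so $\hat e_i(t_0^+)=\sum_{j\neq i}L_{ij}(\sigma_{t_0})\Gamma\bigl[x_j(t_0)-x_j(t^j_{k_j(t_0)})\bigr]$ need not lie below the threshold --- and then merely assert that a uniform margin ``must be combined'' from three ingredients, without carrying out the combination. Since this margin is exactly what separates a positive, uniformly lower-bounded inter-event interval from Zeno behaviour, the proposition is left unproved at its only nontrivial point. Moreover, the ingredients you name cannot deliver that margin. First, the triggering rules (\ref{event1-1})--(\ref{event1-2}) and (\ref{event2}) bound the \emph{aggregated} errors $e_j,\hat e_j$ of (\ref{error-1})--(\ref{error-2}), which are weighted sums of increments over node $j$'s neighbourhood; there is no way to extract a bound on the individual increment $x_j(t)-x_j(t^j_{k_j(t)})$ from $\|\hat e_j\|\le\frac{\beta'\underline{\lambda}}{c\overline{\lambda}}\|\hat x_j\|$. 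Second, Theorem~\ref{thm1} gives decay of $\mathbb E[V(t)]$, a mean-square statement; it cannot be invoked pathwise to control the realized drift of a neighbour on a particular sample trajectory of the Markov switching. Third, and most fundamentally, writing $x_j=\hat x_j+s$ shows $x_j(t_0)-x_j(t^j_{k_j(t_0)})=\hat x_j(t_0)-\hat x_j(t^j_{k_j(t_0)})+s(t_0)-s(t^j_{k_j(t_0)})$: the surviving neighbour term contains target-drift pieces $s(t_0)-s(t^j_{k_j(t_0)})$ whose size is governed by $\|f(s,\cdot)\|$ and the neighbours' elapsed times, and which do \emph{not} scale with $\|\hat x_i(t_0)\|$ or with $V(t)$. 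Consequently no domination of the form $\|\hat x\|\le\sqrt{2/(\underline{\lambda}b)}\,\|\hat x_i\|$ can force $\hat e_i(t_0^+)$ below the threshold $\frac{\beta'\underline{\lambda}}{c\overline{\lambda}}\|\hat x_i(t_0)\|$; any complete proof must confront these drift terms directly (for instance by showing the neighbours' sampling instants cluster as the state contracts, or by using the one-sided condition on $f$ in a way your sketch does not), and your proposal does not do so.
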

The proof of this proposition is omitted here for saving space.
\section{simulation} \label{sec-sim}
In this section, we present numerical examples to illustrate the theoretical results. In these examples, we consider three-dimensional neural network as the uncoupled node dynamics \cite{Zou}:
\begin{equation}\label{neur}\nonumber
\frac{dx}{dt}=-Dx+Hg(x)
\end{equation}
with $x=(x_1,x_2,x_3)^{\top}\in\mathbb{R}^3$,
\begin{eqnarray*}
H=\left[\begin{array}{ccc}
1.2500 & -3.200 & -3.200 \\
-3.200 & 1.100 & -4.400 \\
-3.200 & 4.400 & 1.000
\end{array}\right]
\end{eqnarray*}
$D=I_3$, $g(x)=(g(x_1),g(x_2),g(x_3))^{\top}$, and $g(s)=(|s+1|-|s-1|)/2$. This system has a double-scrolling chaotic attractor with initial value $x_1(0)=x_2(0)=x_3(0)=0.1000$ \cite{Zou}. Noting that $f(x) = -Dx + H g(x)$ has $9$
Jacobin matrices and $\chi = 4.6769$ is the upper bound of the matrices of $f$. Hence, we estimate $\beta' = \alpha-1/2-\chi^2/2$ and $G = I_3$.
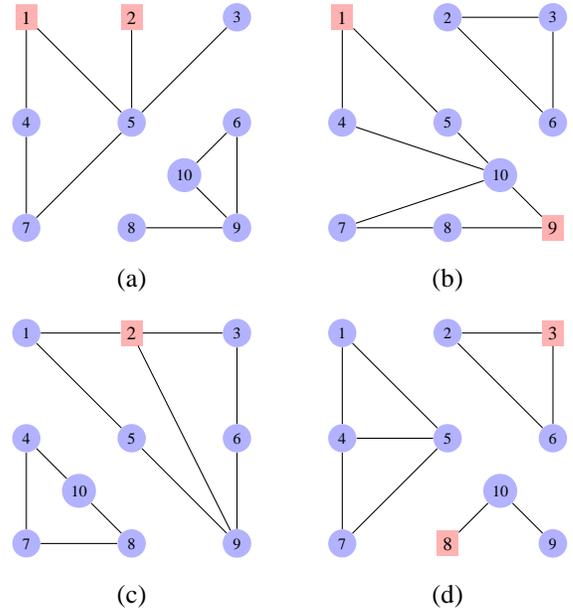
\begin{figure}
\centering
\begin{tikzpicture}[scale=1.4]
\draw
(-1 , 1) node(1)[rectangle, scale=0.7, fill=red!30!white]{1}
(0 , 1) node(2)[rectangle, scale=0.7,fill=red!30!white]{2}
(1 , 1) node(3)[circle, scale=0.6,fill=blue!30!white]{3}
(-1 , 0) node(4)[circle, scale=0.6,fill=blue!30!white]{4}
(0 , 0) node(5)[circle, scale=0.6,fill=blue!30!white]{5}
(1 , 0) node(6)[circle, scale=0.6,fill=blue!30!white]{6}
(-1 , -1) node(7)[circle, scale=0.6,fill=blue!30!white]{7}
(0 , -1) node(8)[circle, scale=0.6,fill=blue!30!white]{8}
(1, -1) node(9)[circle, scale=0.6,fill=blue!30!white]{9}
(0.5, -0.5 ) node(10)[circle, scale=0.6,fill=blue!30!white]{10}

(0,-1.5) node(tt)[]
{(a)};
\draw 
(1) -- (4)
(1) -- (5)
(5) -- (2)
(5) -- (3)
(4) -- (7)
(5) -- (7)
(6) -- (10)
(6) -- (9)
(9) -- (10)
(8) -- (9);

\draw
(2 , 1) node(11)[rectangle, scale=0.7,fill=red!30!white]{1}
(3 , 1) node(12)[circle, scale=0.6,fill=blue!30!white]{2}
(4 , 1) node(13)[circle, scale=0.6,fill=blue!30!white]{3}
(2 , 0) node(14)[circle, scale=0.6,fill=blue!30!white]{4}
(3 , 0) node(15)[circle, scale=0.6,fill=blue!30!white]{5}
(4 , 0) node(16)[circle, scale=0.6,fill=blue!30!white]{6}
(2 , -1) node(17)[circle, scale=0.6,fill=blue!30!white]{7}
(3 , -1) node(18)[circle, scale=0.6,fill=blue!30!white]{8}
(4, -1) node(19)[rectangle, scale=0.7,fill=red!30!white]{9}
(3.5,-0.5)node(20)[circle, scale=0.6,fill=blue!30!white]{10}
(3,-1.5) node(tt)[]
{(b)};
\draw 
(11) -- (14)
(11) -- (15)
(15) -- (20)
(14) -- (20)
(12) -- (13)
(13) -- (16)
(12) -- (16)
(17) -- (20)
(19) -- (20)
(18) -- (19)
(17) -- (18);

\draw
(-1 , -2) node(21)[circle, scale=0.6,fill=blue!30!white]{1}
(0 , -2) node(22)[rectangle, scale=0.7,fill=red!30!white]{2}
(1, -2) node(23)[circle, scale=0.6,fill=blue!30!white]{3}
(-1 , -3) node(24)[circle, scale=0.6,fill=blue!30!white]{4}
(0 , -3) node(25)[circle, scale=0.6,fill=blue!30!white]{5}
(1 , -3) node(26)[circle, scale=0.6,fill=blue!30!white]{6}
(-1 , -4) node(27)[circle, scale=0.6,fill=blue!30!white]{7}
(0 , -4) node(28)[circle, scale=0.6,fill=blue!30!white]{8}
(1, -4) node(29)[circle, scale=0.6,fill=blue!30!white]{9}
(-0.5,-3.5)node(30)[circle, scale=0.6,fill=blue!30!white]{10}
(0,-4.5) node(tt)[]
{(c)};
\draw 
(21) -- (22)
(21) -- (25)
(22) -- (23)
(23) -- (26)
(26) -- (29)
(22) -- (29)
(25) -- (29)
(24) -- (30)
(28) -- (30)
(27) -- (28)
(24) -- (27);

\draw
(2, -2) node(31)[circle, scale=0.6,fill=blue!30!white]{1}
(3 , -2) node(32)[circle, scale=0.6,fill=blue!30!white]{2}
(4, -2) node(33)[rectangle, scale=0.7,fill=red!30!white]{3}
(2 , -3) node(34)[circle, scale=0.6,fill=blue!30!white]{4}
(3 , -3) node(35)[circle, scale=0.6,fill=blue!30!white]{5}
(4 , -3) node(36)[circle, scale=0.6,fill=blue!30!white]{6}
(2 , -4) node(37)[circle, scale=0.6,fill=blue!30!white]{7}
(3 , -4) node(38)[rectangle, scale=0.7,fill=red!30!white]{8}
(4, -4) node(39)[circle, scale=0.6,fill=blue!30!white]{9}
(3.5,-3.5)node(40)[circle, scale=0.6,fill=blue!30!white]{10}
(3,-4.5) node(tt)[]
{(d)};
\draw 
(31) -- (34)
(31) -- (35)
(33) -- (36)
(34) -- (35)
(34) -- (37)
(35) -- (37)
(32) -- (33)
(32) -- (36)
(38) -- (40)
(39) -- (40);
\end{tikzpicture}
\caption{The topologies of the graph of the coupled system and the pinned sets. Pinned sets (a) $\{1,2\}$, (b) $\{1,9\}$, (c) $\{2\}$, (d) $\{3,8\}$.} \label{topology}
\end{figure}

The possible coupling graph topologies and pinned node sets are shown in Fig. \ref{topology}, here $m =10$. Noting that subsystems with every possible network topology and pinned nodes cannot be stabilized to the target trajectory. Theorems \ref{thm1},\ref{thm2} indicate that if the average system can be stabilized and the switching among subsystems is sufficiently fast, then under the event-triggered strategies, the switching system can be stabilized. Via the following numerical simulations, it can be seen that the switching system can be stabilized.

Suppose the generator of the Markov chain $\sigma_t$ is
$
Q = \left(
         \begin{array}{cccc}
           -100 & 35 & 0 & 65 \\
           40 & -100 & 60 & 0 \\
           0 & 50 & -100 & 50 \\
           30 & 70 & 0 & -100 \\
         \end{array}
       \right),
$
then the sojourn time in each topology follows the exponential distribution with parameter $p = 0.01$.

In the following examples, suppose the inner coupling matrix $\Gamma=I_3$, $c = 5$, $\beta' = 1$, $\epsilon = 3$. The ordinary differential equation (\ref{cds}) is numerically solved by the Euler method with a time step 0.001(seconds) and the time duration of the numerical simulations is $[0,10]$ (seconds).
\begin{figure}[!t]
\begin{center}
\includegraphics[width=0.5\textwidth]{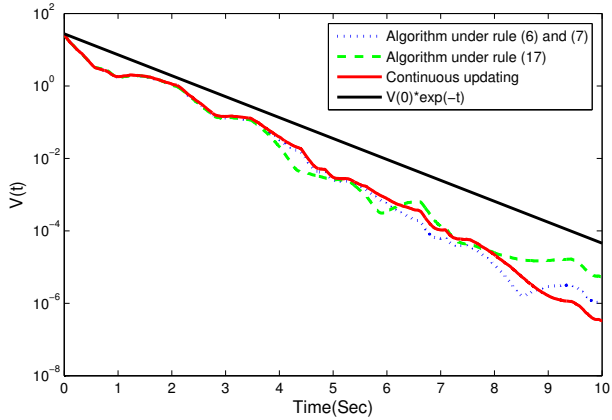}
\caption{The dynamics of Lyapunov function V(t) for systems with event triggering algorithms under rule (\ref{event1-1}) and (\ref{event1-2}) and rule (\ref{event2}) and system with continuous updating.} \label{variation_V}
\end{center}
\end{figure}
Firstly, we employ rule (\ref{event1-1},\ref{event1-2}). Here, $T=0.02$. Fig. \ref{variation_V} shows the dynamics of $V(t)$, which implies that the coupled system (\ref{cds}) is stable. Secondly, rule (\ref{event2}) is considered. The dynamics of $V(t)$ is also given in Fig. \ref{variation_V}, which also implies the stability of system (\ref{cds}). One can see that the coupled system (\ref{cds}) is asymptotically stable at certain  homogeneous trajectory.
\begin{figure}[!t]
\begin{center}
\includegraphics[width=0.5\textwidth]{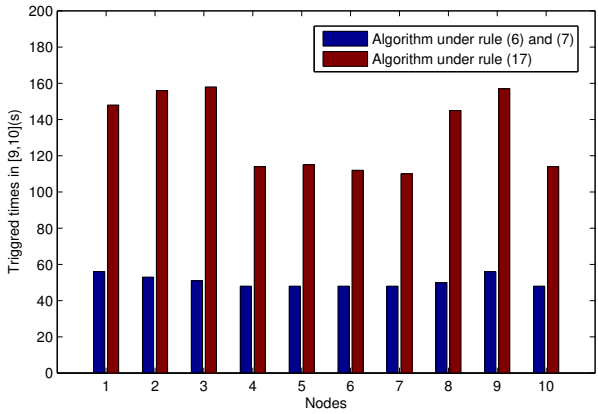}
\caption{Histogram of triggering times of each node in $[9,10]$s under updating rule (\ref{event1-1}) and (\ref{event1-2}) and rule (\ref{event2}).} \label{event-times}
\end{center}
\end{figure}
Furthermore, it can be seen from Fig. \ref{event-times} that the events of updating the diffusion and pinning terms under rule (\ref{event2}) happen more than rule (\ref{event1-1}) and (\ref{event1-2}), as a consequence of fast switching among network topologies.

\section{Conclusions}\label{sec-conclu}
In this paper, event-triggered configurations and feedback pinning are employed to realize stability in linearly coupled dynamical systems with fast Markovian switching, which reduces communication and computation loads. Once an event for a node is triggered, the diffusion coupling term and feedback
control (if pinned) of this node will be updated. Event triggering criteria are derived for each node that can be computed in a parallel way. Two event-triggered rules are proposed and proved to perform well and can exclude Zeno behaviors in some cases. Simulations are given to verify these theoretical results.

\section*{Acknowledgment}
This work was jointly supported by the National Natural Sciences Foundation of China under Grant Nos. 61273211 and 61673119 and the Program for New Century Excellent Talents in University (NCET-13-0139).
\begin {thebibliography}{99}

\bibitem{Wu}
C. W. Wu and L. O. Chua, Synchronization in an array of linearly coupled dynamical systems, {\em IEEE Trans. on Circuits and Systems I: Fundamental Theory and Applications}, 42(8): 430-447, (1995).

\bibitem{Belykh}
V. N. Belykh, I. V. Belykh, and M. Hasler, Connection graph stability method for synchronized coupled chaotic systems, {\em Physica D: nonlinear phenomena}, 195(1): 159-187, 2004.


\bibitem{Chen07}
T. Chen, X. Liu, and W. Lu, Pinning Complex Networks
by a Single Controller, {\em IEEE Transactions on Circuits and
Systems-I: Regular Papers}, 54(6), 2007, 1317-1326

\bibitem{LLR}
W. Lu, X. Li, and Z. Rong, Global stabilization of complex networks with digraph topologies via a local pinning algorithm, {\em Automatica}, 46(1): 116-121, 2010.


\bibitem{Han14}
Y. Han, W. Lu, and T. Chen, Pinning dynamical systems of networks with Markovian switching couplings and controller-node set, {\em System \&
Control Letters}, 65: 56-63, 2014.

\bibitem{Astrom}
K. J. \AA str\"om and B. Bernhardsson, Comparison of Riemann and Lebesgue sampling for first order stochastic systems, in {\em Proceedings of 41st IEEE Conference on Decision and Control}, 2002.


\bibitem{Dimarogonas}
D. V. Dimarogonas, E. Frazzoli, and K. H. Johansson, Distributed event-triggered control for multi-agent systems, {\em IEEE Trans. on Automatic Control}, 57(5): 1291-1297, 2012.


\bibitem{Alderisio}
F. Alderisio, {\em Pinning Control of Networks: an Event-Triggered Approach}, Master thesis, 2013.

\bibitem{Gao}
L. Gao, X. Liao, and H. Li,
Pinning controllability analysis of complex networks with a distributed event-triggered mechanism, {\em IEEE Trans. on Circuits and Systems II: Express Briefs}, 61(7): 541-545, 2014.

\bibitem{Lu15pin}
W. Lu, Y. Han, and T. Chen, Pinning networks of coupled dynamical systems with Markovian switching couplings and event-triggered diffusions, {\em Journal of Franklin}, 352(9): 3526-3545, 2015.

\bibitem{Frasca}
M. Frasca, A. Buscarino, A. Rizzo, L. Fortuna, Spatial pinning control,
{\em Physical Review Letters}, 108: 204102, 2012.

\bibitem{Porfiria}
M. Porfiria, D. J. Stilwell, E. M. Bollt, and J. D. Skufca,  Random talk:
random walk and synchronizability in a moving neighborhood network,
{\em Physica D}, 224: 102-113, 2006.

\bibitem{Chilina}
O. Chilina, {\em F-Uniform Ergodicity of Markov Chains}, University of Toronto, 2006.

\bibitem{Bremaud}
P. Bremaud, {\em Markov Chains, Gibbs Fields, Monte Carlo Simulation, and Queues}, Springer Verlag, 1999.

\bibitem{Billingsley}
P. Billingsley, {\em Probability and Measure}, Wiley, New York, 1986.

\bibitem{Horn}
R. A. Horn and C. R. Johnson, {\em Matrix Analysis}, Cambridge University Press, 1985.

\bibitem{Johansson}
K. H. Johansson, M. Egerstedt, J. Lygeros, and S. S. Sastry, On the
regularization of zeno hybrid automata, {\em Systems \& Control Letters}, 38(3):
141�C150, 1999.

\bibitem{Zou}
F. Zou and J. A. Nosse, Bifurcation, and chaos in cellular neural networks, {\em IEEE Trans. on Circuits and Systems I: Fundamental Theory and Applications}, 40(3): 166-173, 1993.

\end{thebibliography}
\end{document}